	\newtheorem{definition}{Definition}
	\newtheorem{theorem}{Theorem}
	\title{Fast Snapshottable Concurrent Braun Heaps}
	\author{Thomas D. Dickerson}
\begin{document}
	\maketitle
	\begin{abstract}
		This paper proposes a new concurrent heap algorithm, based on a stateless shape property, which efficiently maintains balance during insert and removeMin operations implemented with hand-over-hand locking.
		It also provides a O(1) linearizable snapshot operation based on lazy copy-on-write semantics.
		Such snapshots can be used to provide consistent views of the heap during iteration, as well as to make speculative updates (which can later be dropped).
		
		The simplicity of the algorithm allows it to be easily proven correct, and the choice of shape property provides priority queue performance which is competitive with highly optimized skiplist implementations (and has stronger bounds on worst-case time complexity).
		
		A Scala reference implementation is provided.
	\end{abstract}
	\section{Introduction}
		Concurrent collections data structures are an important tool in modern software design, but they often provide a limited set of operations compared to their serial counterparts, due to the complexity of concurrent reasoning and implementation.
		The priority queue is a widely used data structure, that allows a user to efficiently query and remove its smallest element.
		These operations impose a weaker ordering requirement than other collections (e.g. an ordered map), and it is thus potentially more efficient to implement those operations.
		Serial priority queues are typically implemented as either a heap or a skiplist, and heaps have stronger guarantees of efficiency.
		Unfortunately, tree data structures have proven more difficult to efficiently parallelize, and so most concurrent priority queues are based on some form of skiplist.~\cite{shavit2000skiplist}
		
		This paper defines a new concurrent heap algorithm with a shape property that is better tailored to concurrent updates,
		and augments it with a fast snapshot operation based on a copy-on-write methodology.
		Snapshots can be used to provide read-only iterator semantics, but they can also be used to perform speculative updates, which are a key component in wrapping concurrent data structures with a transactional API or implementing other kinds of reversible atomic objects.~\cite{dickerson2017proust,antonopoulostheory}
		
		A Scala reference implementation of the algorithm is provided in the appendix.
	\subsection{Prior Work}
	\begin{figure*}
		\centering
		\begin{subfigure}[t]{0.3\textwidth}
			\includegraphics[width=\textwidth]{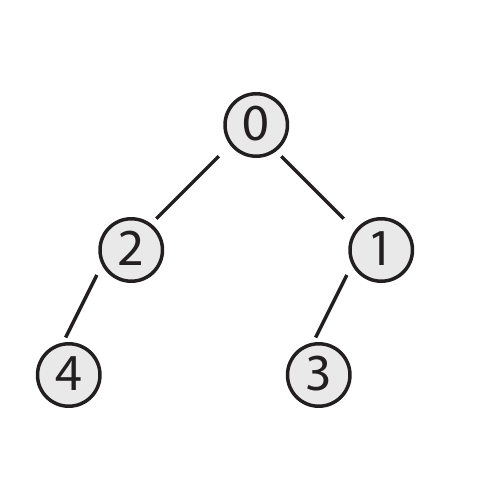}
			\caption{\label{fig:cbheapinit} The initial state.}
		\end{subfigure}
		\begin{subfigure}[t]{0.3\textwidth}
			\includegraphics[width=\textwidth]{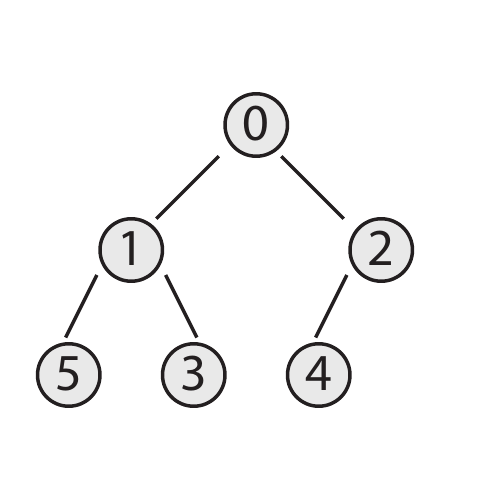}
			\caption{\label{fig:cbheapinsert} After \texttt{insert(5)}.}
		\end{subfigure}
		\begin{subfigure}[t]{0.3\textwidth}
			\includegraphics[width=\textwidth]{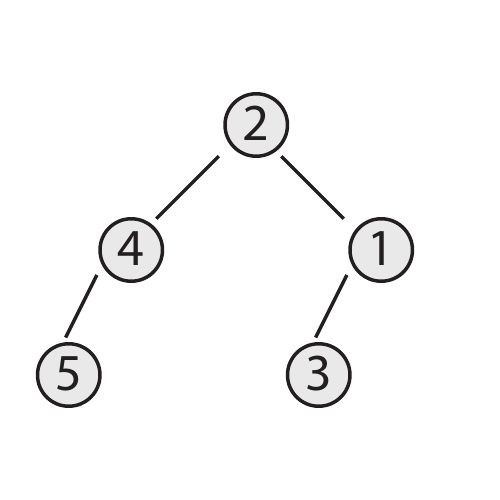}
			\caption{\label{fig:cbheapdelete} After \texttt{removeMin}.}
		\end{subfigure}
		\caption{A sequence of operations on a Braun heap, initialized with \texttt{(0 to 4).foreach(heap.insert(\_))}.}
	\end{figure*}
	This paper builds on several prior lines of work.
	
	Efficient tree data structures rely on a shape (or balance) property to ensure that the depth of the tree grows only logarithmically in the number of elements.
	The Braun property states that for every node, $n$, in a tree, the size of its left and right subtrees, $l$ and $r$, are constrained such that $|r| \le |l| \le (|r|+1)$.
	It was first proposed for efficient implementations of immutable trees in a pure-functional setting \cite{braun1983logarithmic,hoogerwoord1993logarithmic}, and while relatively unknown in the broader community\footnote{Unlike many algorithms and data structures, there is as of yet no Wikipedia article on the subject, which seems a reasonable measure of obscurity}, Braun trees have long been known to be useful in implementing heaps.~\cite{nipkow2016priority,okasaki1997three}
	An example of operations on a Braun heap is provided in Figures \ref{fig:cbheapinit}, \ref{fig:cbheapinsert}, and \ref{fig:cbheapdelete}.
	
	Collections data structures are of limited utility if their API only allows you to inspect or alter a single element at a time, yet bulk read operations are challenging in a concurrent context.
	Correctness of concurrent operations is often defined in terms of linearizability \cite{herlihy1990linearizability}, which ensures that all operations appear to happen instantaneously; however, a bulk read operation necessarily must capture the state of the entire data structure.
	Several recent works have leveraged a copy-on-write methodology to provide efficient linearizable snapshots for a number of data structures, including AVL trees \cite{bronson2010practical} and hash tries \cite{prokopec2012concurrent}. 
	Linearizable iterators are also an area of inquiry, recent work has implemented them for skip lists \cite{petrank2013lock}, which are of interest for implementing priority queues.
	
	Additionally, there are a number of other concurrent skiplist variants, which might be used as the basis for a priority queue.~\cite{herlihy2007simple,pugh1990concurrent,dleaCSLM}
	\subsection{Overview}
	It is unsurprising that importing insights from the purely-functional data structures community would be of benefit in designing new concurrent data structures, particularly those requiring a fast-snapshot operation.
	Without mutation, \textbf{every} reference to a data structure is intrinsically a snapshot (this property is known as persistence \cite{okasaki1996purely}), because every update operation returns a new data structure (potentially sharing references to sub-structures in the old one).
	Since allocation of new objects is a relatively expensive operation, it is beneficial to design data structures that do not require extensive modifications to maintain bookkeeping state or enforce algorithmic invariants.
	Moreover, every point where an algorithm on purely functional data structures would allocate a new object, a concurrent algorithm implementing the same update as a mutation will require synchronization, potentially resulting in a bottleneck.
	
	Furthermore, functional programming separates reasoning about the result of a computation from reasoning about its execution, allowing a program to be written correctly even if the result of a computation is delayed\footnote{In a single threaded setting, this is referred to as laziness. In a concurrent setting it is often referred to by the names ``future'' or ``promise'' \cite{friedman1976impact,baker1977incremental}.}.
	
	The Braun property in particular can be maintained by following two simple rules, without any bookkeeping.
	First, every insertion or deletion must swap the left and right subtrees.
	Second, insertions traverse towards the (new) left subtree, and deletions proceed towards the (new) right subtree.
	As an added benefit for concurrency, for every consecutive pair of updates, $u,v$, to a subtree rooted at a node $n$, the traversal paths of $u$ and $v$ will intersect only at $n$.
	Thus, the number of threads that can be working simultaneously doubles at each level of the tree.
	On the other hand, in a traditional binary heap, utilizing the fullness property, every insertion and deletion is in contention for the boundary position between filled and unfilled leaves, and, with high probability, every consecutive pair of updates will intersect for the majority of their path towards the root\footnote{Intuitively, they will intersect after 1 step with probability $\frac{1}{2}$, after 2 steps with probability $\frac{3}{4}$, etc.}.
	
	A sequential implementation of a Braun heap is provided in Listing \ref{lst:seqbheap}, consisting of two classes: heap nodes, containing a value and two child references, and the heap itself, containing a single reference to the root.
	Our concurrent version preserves the essence of this implementation (see Listings \ref{lst:parbheaphelpers},\ref{lst:parheapnode1},\ref{lst:parheapnode2},\ref{lst:parbraunheap} for details), while providing thread-safety and linearizable snapshots.
	With one exception, the effects of mutating operations only propagate leafward through the tree, so thread safety is achieved simply by following a hand-over-hand locking discipline.
	In the case of \texttt{removeMin}, the mutations are separated into two phases, the first to maintain the Braun property, the second to maintain the heap property.
	In the first phase, nodes must be able to reach upwards to modify references to themselves held by their immediate parents; however, at the beginning of the first phase, the mutex on the root is reentrantly acquired a second time to prevent observation while the heap property is violated, so thread safety is not violated by this minor breach of hand-over-hand discipline.
	
	The fast snapshot functionality is implemented as a reference-counting lazy copy-on-write (COW) scheme.
	Each node is augmented with a \texttt{snapCount} property, which tracks the number of snapshot references to that node (it is initialized to 0, since the copy in the ``original'' heap is not considered a snapshot).
	When a snapshot of a heap is requested, a new heap\footnote{Recall that the heap class serves only to hold a reference to the root.} is allocated with a reference to the same root node, and the \texttt{snapCount} of that node is incremented.
	Mutation of any node with \texttt{snapCount > 0} is forbidden, as this would lead to inconsistent state in the other heaps that reference it.
	Instead, a new node is allocated with the updated values, and the \texttt{snapCount}s of its left and right children are incremented, and the \texttt{snapCount} of the original node, for which mutation was requested, is decremented.
	This has the effect of lazily ``peeling'' the snapshot away from the original (i.e. only as-needed).
	
	In a language with deterministic destructors, it would be correct for a heap to decrement the \texttt{snapCount} of any snapshotted nodes when it exits scope; however, on the JVM \texttt{finalize} is often executed much later, and any unnecessary allocations resulting from an imprecise \texttt{snapCount} are likely to have already occurred.
	Thus, this implementation does not implement a \texttt{finalize} hook for that traversal.
	\begin{listing*}
	\begin{minted}{scala}
class HeapNode[E:Ordering](var v: E, var left:HeapNode[E] = null,
                                     var right:HeapNode[E] = null){
  def update(nV:E, nL:HeapNode[E], nR:HeapNode[E]):HeapNode[E] = {
    v = nV; left = nL; right = nr; this
  }
  def insertHelper(nV: E):HeapNode[E] = {
    val (smaller, larger) = if(nV < v) { (nV, v) } else { (v, nV) }
    update(smaller, if(right == null) { new HeapNode(larger) } 
                    else { right.insertHelper(larger) }, left)
  }
  def pullUpLeftHelper(selfP:HeapNode[E] => Unit, isRoot:Boolean = false):Option[E] = {
    if(left == null) {
      selfP(null)
      if(root == this) { None }
      else { Some(value) }
    } else {
      update(v, right, left)
      right.pullUpLeftHelper(right_= _)
    }
  }
  def pushDownHelper:Unit = {
    val pushNext = if(right == null && left != null) {
      val tmpLV = left.v
      if(tmpLV < v){ left.update(v); v = tmpLV }
      null
    } else if(right != null) {
      val (tmpV, tmpLV, tmpRV) = (v, left.v, right.v)
      (if(tmpLV < tmpV || tmpRV < tmpV) { v = tmpLV; left }
       else { v = tmpRV; right }).update(tmpV)
    } else { null }
    if(pushNext != null) pushNext.pushDownHelper
  }
}
class BraunHeap[E:Ordering] {
    var root:HeapNode[E] = null
    def getMin:Option[E] = Option(root).map(_.v)
    def insert(v:E):Unit = if(root == null){ root = new HeapNode(v) }
                             else { root.insertHelper(v) }
    def removeMin:Option[E] = getMin.map{ oldV =>
        root.pullUpLeftHelper(root_= _, true).foreach{ newV =>
          root.v = newV; root.pushDownHelper
        }
        oldV
    }
}
	\end{minted}
	\caption{\label{lst:seqbheap} Sequential Logic for a Braun Heap}
	\end{listing*}
	
	\section{Correctness}
	The correct execution of operations on the concurrent Braun Heap is subject to three invariants.
	The Braun property has already been defined, but we restate it here for convenience, in Definition \ref{def:braun}.
	\begin{definition}[\label{def:braun}Braun Tree]
	A binary tree, $T$, is Braun iff for every node, $n \in T$,  the size of its left and right subtrees, $n.l$ and $n.r$, are constrained such that $|n.r| \le |n.l| \le (|n.r|+1)$.
	\end{definition}
	The heap property is well known, but we use a non-standard formulation (Definition \ref{def:heap}) conducive to describing the behavior of threads traversing a tree.
	\begin{definition}[\label{def:heap}Heap]
	A tree, $T$, is a heap iff the values held by the nodes in every rootward path are nonincreasing.
	\end{definition}
	Snapshot isolation is well studied in the database literature, but our application is concerned with the evolution of data structures subject to linearizable operations, and not subjected to transactional semantics.~\cite{berenson1995critique,daudjee2006lazy}
	Nevertheless, we appropriate the terminology and provide a closely related\footnote{An optimistic STM wrapper built using the techniques of \cite{dickerson2017proust} would provide snapshot isolation to any transactions accessing the heap.} Definition \ref{def:snap}.
	\begin{definition}[\label{def:snap}Snapshot Isolation]
	Two heaps, $H_1, H_2$, subject to structural sharing, provide snapshot isolation, iff, without loss of generality, no modification of $H_1$ is reflected in the multiset of values held by $H_2$.
	\end{definition}
	
	If every operation provided by the heap individually preserves the Braun, Heap, and Snapshot Isolation properties, as defined above, and appears to take place at an atomic linearization point, then the algorithm is correct.
	We will explicitly consider the \texttt{insert}, \texttt{removeMin}, \texttt{getMin}, and \texttt{snapshot} operations as implemented in Listing \ref{lst:parbraunheap}, as well as the \texttt{pullUpLeftHelper} and \texttt{pushDownHelper} helper methods that \texttt{removeMin} is built on.
	For purposes of reasoning about atomicity through a hand-over-hand traversal, we will consider the heap's lock mediating access to the root to be the base case, rather than the locks owned by the root itself.
	
	\begin{theorem}[Correctness of getMin] 
	\texttt{getMin} happens atomically, and is Braun, Heap, and Snapshot Isolation preserving.
	\end{theorem}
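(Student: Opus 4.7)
The plan is to treat \texttt{getMin} as a degenerate case of the correctness argument: because the operation reads exactly one field and never writes, three of the four obligations collapse to an appeal to read-onlyness, and only atomicity requires real work. I would therefore organize the proof into two parts: (i) establish a linearization point, and (ii) observe that the three structural invariants are preserved vacuously.

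For atomicity, I would appeal to the convention stated just before the theorem: the heap's lock mediating access to the root is the base case of the hand-over-hand discipline. My plan is to describe \texttt{getMin} as acquiring this lock, loading the reference \texttt{root}, reading \texttt{root.v} if non-null, and releasing the lock; the linearization point is placed at the load of \texttt{root.v} (or at the load of \texttt{root}, if \texttt{null}, yielding \texttt{None}). I would then argue that because every operation that could modify the identity of the root node, or the value stored at the root, must itself hold that same heap-level lock (for \texttt{insert} when \texttt{root == null}, and for \texttt{removeMin} throughout its two-phase modification), no interfering write can occur between these loads. Thus the value returned is precisely the value held at the root at the moment of the linearization point.

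For the three preservation properties, I would simply note that \texttt{getMin} performs no assignments to any field of the heap or of any \texttt{HeapNode}, and allocates no nodes. Since Definition~\ref{def:braun} and Definition~\ref{def:heap} are purely structural and value-based, and since Definition~\ref{def:snap} constrains the multiset of values in one heap after modification of another, the absence of modification makes all three invariants trivially inherited from the pre-state.

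The main obstacle, such as it is, is being precise about the linearization point in the \texttt{null} case and convincing the reader that no modification by a concurrent \texttt{removeMin} can slip between the load of \texttt{root} and the subsequent dereference of its \texttt{v} field. I expect this to reduce to the observation that \texttt{removeMin}'s reentrant second acquisition of the root lock (described in the overview) ensures any period during which the heap property is temporarily violated lies strictly inside the root-lock critical section, so a \texttt{getMin} competing for the same lock is serialized either wholly before or wholly after that window.
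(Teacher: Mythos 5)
Your overall decomposition matches the paper's: the three invariants are dispatched by read-onlyness exactly as in the published proof, and atomicity is argued from lock-based mutual exclusion. That part is fine and essentially identical.

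The flaw is in the supporting claim that carries your atomicity argument: you assert that every operation which could modify the root's identity or the value stored at the root must hold the heap-level lock, and you read \texttt{root.v} under that lock alone. In the actual implementation this premise is false. Both \texttt{insert} and \texttt{removeMin} release the heap's lock before they touch \texttt{root.v}: \texttt{insert} mutates the root's value inside \texttt{insertHelper} while holding only the root \emph{node}'s write lock, and \texttt{removeMin} performs \texttt{root.value = newRoot.get} and \texttt{pushdownHelper} after \texttt{removeMinPre} has returned and the heap lock has been dropped, again protected only by the (reentrantly held) node-level write lock. So a \texttt{getMin} holding only the heap's read lock could race with one of these node-locked writes and observe an intermediate value. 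The implementation (and the paper's proof) closes this hole with a second acquisition: \texttt{getMin} takes a read lock on the heap to load the root reference, and \emph{subsequently} a read lock on the root node itself to read \texttt{v}; it is this node-level read lock, conflicting with the writers' node-level write locks, that makes the read of \texttt{v} safe. Your argument would be repaired by replacing the single-lock claim with this two-level hand-off, which is also consistent with the convention you cite (the heap's lock is only the \emph{base case} of the hand-over-hand discipline, not the sole synchronizer).
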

	\begin{proof}
	 \texttt{getMin} is a read-only operation, so Braun, Heap, and Snapshot Isolation are preserved. 
	 \texttt{getMin} acquires a read-lock on the heap, and subsequently on its root node, thus it appears to happen atomically at the moment when the read lock is acquired on the heap, as no conflicting operations can occur once that lock is acquired.
	\end{proof}
	
	\begin{theorem}[Correctness of snapshot] 
	\texttt{snapshot} happens atomically, and is Braun, Heap, and Snapshot Isolation preserving.
	\end{theorem}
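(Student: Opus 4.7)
The plan is to decompose the theorem into its four obligations (atomicity, and preservation of each of Braun, Heap, and Snapshot Isolation) and dispatch them in roughly increasing order of difficulty, mirroring the structure of the preceding proof for \texttt{getMin}.

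First, for atomicity, I would argue that \texttt{snapshot} acquires a write-lock on the heap object (needed because it must mutate \texttt{snapCount} on the root, and must prevent any concurrent insert/removeMin from modifying the root field between the read of the root reference and the increment). The linearization point is the moment this lock is taken: at that instant no concurrent structural operation can be in progress, the root reference is read, its \texttt{snapCount} is incremented, and the new heap object is constructed holding that same root reference. All of these steps happen while the heap-level lock is held, so the operation appears instantaneous to any other thread.

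Next, for the Braun and Heap properties, I would observe that \texttt{snapshot} performs no structural mutation: it neither changes any \texttt{v}, \texttt{left}, or \texttt{right} field of any \texttt{HeapNode}, nor does it add or remove nodes from the underlying tree. The only fields it touches are the root reference of the freshly allocated heap (which, by assumption that the source heap was Braun and a heap, points to a Braun heap-ordered tree) and the \texttt{snapCount} of the root, which participates in neither Definition~\ref{def:braun} nor Definition~\ref{def:heap}. Thus both properties are preserved vacuously in both $H_1$ and $H_2$.

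The only substantive obligation is Snapshot Isolation. Here I would take $H_1$ to be the pre-existing heap and $H_2$ the newly returned one, and argue that after the linearization point, every node reachable from $H_2$'s root has \texttt{snapCount} $\ge 1$. I would then appeal to the copy-on-write discipline described in the Overview: any subsequent mutating operation on $H_1$ that reaches a node $n$ with \texttt{snapCount} $> 0$ allocates a fresh node, increments the \texttt{snapCount} of $n$'s children (preserving the invariant that every node reachable from $H_2$ still has positive \texttt{snapCount}), and decrements $n$'s own count, without ever writing into $n$'s \texttt{v}, \texttt{left}, or \texttt{right}. An induction on the number of mutating operations applied to $H_1$ then shows that the multiset of values reachable from $H_2$'s root is invariant, which is exactly Definition~\ref{def:snap}. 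The main obstacle is stating this induction cleanly, since it requires the COW invariant (``every node reachable from a snapshot root has \texttt{snapCount} $> 0$'') to be maintained by the other operations as well; I would handle this by formulating it as a global invariant to be rechecked in the \texttt{insert} and \texttt{removeMin} proofs, rather than trying to prove it entirely inside this theorem.
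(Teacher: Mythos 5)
Your treatment of atomicity and of the Braun and Heap properties matches the paper's: the linearization point is the acquisition of the heap-level lock, and since \texttt{snapshot} touches no \texttt{v}, \texttt{left}, or \texttt{right} field of any node, both shape properties are preserved vacuously. The paper dispatches Snapshot Isolation in the same minimal way --- \texttt{snapshot} alters neither a node's stored value nor its children, so it cannot change the multiset of values held by any existing heap --- and, like you, it pushes the burden of maintaining isolation under \emph{later} mutation onto the proofs for \texttt{insert}, \texttt{pullUpLeftHelper}, and \texttt{pushDownHelper}, each of which separately argues that a node with positive \texttt{snapCount} is copied rather than mutated.

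The one genuine problem is the invariant you propose to anchor that induction: ``after the linearization point, every node reachable from $H_2$'s root has \texttt{snapCount} $\ge 1$.'' That is false by design. The scheme is a \emph{lazy} copy-on-write: \texttt{snapshot} increments the count of the root only, and every proper descendant keeps \texttt{snapCount} $= 0$ until the peeling actually reaches it, so your base case already fails for any heap of more than one node. The parenthetical repair (incrementing the children of a peeled node ``preserves the invariant'') does not rescue it, because the grandchildren still have count $0$ at that moment. The invariant you actually need is path-based: a node shared between $H_1$ and $H_2$ need not itself carry a positive count, but the unique path to it from $H_1$'s root must pass through some node with a positive count, so that any mutating traversal from $H_1$ is forced to copy before it can reach the shared node; the peeling step preserves this because it pushes positive counts one level leafward exactly when the guarding node's count is consumed. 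With that reformulation your induction goes through, and it would in fact be a more explicit account than the paper gives, since the paper never states the sharing invariant at all.
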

	\begin{proof}
	When a thread $t$ executes \texttt{snapshot} on a heap, $h$, rooted at node $n$, it increments the \texttt{snapCount} of $n$, and allocates a new heap, $h'$, with $h'.n = n$.
	
	 \texttt{snapshot} does not alter either a node's stored value, or its children, so Braun, Heap, and Snapshot Isolation are preserved.
	 \texttt{snapshot} acquires a write-lock on the heap, and subsequently on its root node, thus it appears to happen atomically at the moment when the write lock is acquired on the heap, as no conflicting operations can occur once that lock is acquired.
	\end{proof}
	
	\begin{theorem}[Correctness of insert] 
	\texttt{insert} happens atomically, and is Braun, Heap, and Snapshot Isolation preserving.
	\end{theorem}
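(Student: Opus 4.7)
The plan is to decompose the theorem into four obligations---atomicity, Braun preservation, Heap preservation, and Snapshot Isolation preservation---and discharge them in turn, following the structure already used for \texttt{getMin} and \texttt{snapshot}.  \texttt{insert} first acquires a write-lock on the heap and then descends from the root in hand-over-hand fashion via \texttt{insertHelper}, holding at all times at least one lock on the active portion of the tree.  I would place the linearization point at the instant the heap's write-lock is acquired: from that moment no conflicting mutator can overtake the descending insertion, and any concurrent reader must pass through the still-locked frontier before it could observe any part of the tree that the insertion is about to touch.

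For Braun preservation I would induct on the recursion depth.  Let $n$ be the node at which \texttt{insertHelper} is invoked with candidate $x$, and write $L,R$ for its original subtrees, so $|R|\le|L|\le|R|+1$.  The base case, in which $R$ is \texttt{null}, allocates a fresh leaf and trivially satisfies Braun.  In the inductive step the recursive call returns a subtree of size $|R|+1$, which $n$ adopts as its new left child, while its new right child is $L$ unchanged; the constraints $|L|\le|R|+1$ and $|R|+1\le|L|+1$ follow immediately from the original Braun inequalities.  Heap preservation is essentially a one-line argument: at each level the node stores $\min(v,x)$ and forwards only $\max(v,x)$ to the recursive call, so the new value of $n$ is no larger than the old value (preserving the relationship with $n$'s parent) and is no larger than the value injected into its new left subtree (preserving the relationship with its child on the touched path).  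Snapshot Isolation preservation follows from the COW discipline: whenever \texttt{insertHelper} must mutate a node $m$ with $\texttt{snapCount} > 0$, it instead allocates a fresh copy holding the updated state, increments the \texttt{snapCount} of each child of $m$, and decrements $m$'s own \texttt{snapCount}; no node reachable from an outstanding snapshot is ever written in place, so the multiset of values held by that snapshot is unaffected.

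The main obstacle will be making the atomicity argument fully airtight.  Because hand-over-hand releases a node's lock as soon as its successor's lock is acquired, there is a transient moment after the swap at $n$ at which the observable sizes of $n$'s subtrees are $(|R|,|L|)$ rather than the Braun-legal $(|R|+1,|L|)$, and only bottom-up completion of the recursion restores the invariant globally.  I would close this gap by observing that any thread positioned to witness the discrepancy would have to acquire the lock on $n$'s new left child, which the inserting thread holds continuously throughout its recursive descent into that child; consequently the transient violation is never externally visible, the value installed at each already-unlocked ancestor is independently heap-legal, and placing the linearization point at the heap's write-lock acquisition is sound.
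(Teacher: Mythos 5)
Your proposal is correct and follows essentially the same route as the paper: linearization at the heap's write-lock acquisition via hand-over-hand locking, the swap-and-recurse-left size argument for Braun, the $\min/\max$ argument for the heap property, and the reference-counted copy-on-write discipline for snapshot isolation. Your closing paragraph on the transient Braun violation at a node whose lock has been released before the recursive call completes is in fact a more careful treatment than the paper gives, which simply asserts that holding the child's lock before proceeding suffices.
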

	\begin{proof}
	 \texttt{insert} acquires a write-lock on the heap and observes hand-over-hand locking along its leafward path.
	 At step $i$, no previous conflicting operation can intrude, as the lock for node $n_i$ on the path is held, and no conflicting operation can be observed, as the lock for $n_{i+1}$ will be acquired before proceeding
	 \footnote{Note that this holds even when $n_i$ is structurally shared as the result of a snapshot operation.
	 If $n_i \in h$ and $n_i \in h'$, then competing threads performing inserts on $h$ and $h'$ must still acquire the same lock on $n_i$}.
	 Inductively, the entire traversal appears to happen atomically at the moment when the write lock is acquired on the heap.
	 
	 When a thread executing \texttt{insert(v)} arrives at a subtree, $t = \{t.v,t.l,t.r\}$, an \texttt{update} is performed such that,
	 $t' = \{\mathtt{min}(t.v,v),t.r.\mathtt{insert}(\mathtt{max}(t.v,v)) ,t.l\}$, and $t'$ is returned.
	 
	 If $t$ is a snapshot, then $t'$ is the result of allocating a new node (with snapshots of its children), otherwise $t$ is mutated directly, thus Snapshot Isolation is preserved.
	  
	 Assume that when the thread arrives at $t$, then $t$ is Braun.
	 Recall that Braun implies $|t.r| \le |t.l| \le |t.r|+1$.
	 Thus after \texttt{insert}, we have that $|t'.l| = |t.r|+1$ and $|t'.r| = |t.l|$.
	 Thus $|t'.r| \le |t'.l| \le |t'.r|+1$ and Braun holds for $t'$.
	 
	 Assume that when the thread arrives at $t$, then $t$ is a heap, implying $t.v \le t.r.v$.
	 Since $t'.v = \mathtt{min}(v, t.v)$ and $t'.l.v = \mathtt{min}(t.r.v,\mathtt{max}(v,t.v))$, then $t'.v \le t.v \le t'.l.v$.
	 Thus $t'$ is also a heap.
	\end{proof}
	
	\begin{theorem}[Correctness of pullUpLeftHelper ]
	If invoked by \texttt{removeMin},  \texttt{pullUpLeftHelper} happens atomically, and is Braun, Heap, and Snapshot Isolation preserving.
	\end{theorem}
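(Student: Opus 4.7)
The plan is to mirror the four-part structure used for \texttt{insert}: establish atomicity from the locking discipline, then verify preservation of Snapshot Isolation, Braun, and Heap in turn. The novel wrinkles here are (i) the upward-reaching \texttt{selfP} callback at the recursion's base case, which nominally violates strict hand-over-hand discipline, and (ii) the fact that the pulled-up leaf value is carried back through the call stack without being written into the tree during this helper's execution.

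For atomicity, I would observe that \texttt{removeMin} reentrantly re-acquires the root's lock before entering \texttt{pullUpLeftHelper}, so the outermost root lock is held for the entire first phase. The helper then recurses leafward under hand-over-hand discipline, so at each recursive frame the locks on both the current node and its parent are held. At the base case, \texttt{selfP(null)} writes through the parent's child-slot setter, and this write executes while the parent's lock is still held by the enclosing frame; crucially, the reentrant outer root lock additionally forecloses any external observer from witnessing the transient state. The entire call therefore appears to happen atomically at the moment the root lock is acquired.

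For Braun preservation I would induct on the recursion. In the recursive case the local update swaps children ($t' = \{t.v, t.r, t.l\}$) and then recurses into the new right subtree, removing one element; given the precondition $|t.r| \le |t.l| \le |t.r|+1$, the resulting sizes $|t'.l| = |t.r|$ and $|t'.r| = |t.l|-1$ satisfy $|t'.r| \le |t'.l| \le |t'.r|+1$ exactly when the original Braun inequalities hold. The base case is reached only at a leaf, since the Braun property forces $|t.r| = 0$ whenever $|t.l| = 0$, so detaching $t$ via \texttt{selfP(null)} discards no live data. Heap preservation is easier: \texttt{pullUpLeftHelper} never overwrites a stored value, it only swaps children (which is symmetric with respect to the Heap property) and detaches a leaf, both of which keep every rootward path nonincreasing. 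The leaf's value is returned through the call stack and only later written into the root by \texttt{removeMin}, so within the scope of this helper the Heap property is never violated. Snapshot Isolation follows from the same \texttt{update}-contract reasoning used for \texttt{insert}: whenever $t$ is snapshotted, \texttt{update} allocates a fresh node with \texttt{snapCount}-incremented children rather than mutating $t$ in place.

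The subtlest step --- and the one I expect to be the main obstacle --- is the atomicity argument around the \texttt{selfP} callback, where I need to carefully pin down exactly which locks are still held at the moment of the upward write and appeal to the reentrant outer root lock to rule out any externally observable transient state. Once that justification is in hand, the remaining obligations reduce to the same structural case analysis already carried out for \texttt{insert}.
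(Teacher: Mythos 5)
Your proposal matches the paper's proof in structure and substance: atomicity via hand-over-hand locking plus the retained reentrant root lock to cover the upward \texttt{selfP} write at the leaf, the same size bookkeeping ($|t'.l| = |t.r|$, $|t'.r| = |t.l|-1$) for Braun, the no-insertion observation for Heap, and the \texttt{update} allocation contract for Snapshot Isolation. One small correction: in the listed code the parent's lock is released before recursing into the leaf, so the \texttt{selfP(null)} write at the base case is \emph{not} protected by a still-held parent lock but only by the retained heap/root lock --- which is exactly the justification the paper itself gives for this acknowledged breach of hand-over-hand discipline, so your argument still goes through.
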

	\begin{proof}
	\texttt{pullUpLeftHelper} is called with a write-lock on the root and observes hand-over-hand locking along its leafward path.
	The atomicity argument is largely as for \texttt{insert}; however, upon arriving at the target leaf, the reference to that leaf in its immediate parent is deleted.
	Despite this violation of hand-over-hand locking, \texttt{pullUpLeftHelper} maintains its atomicity if the invoking code retained a lock on the heap preventing any other threads from traversing down after it.
	
	When a thread executing \texttt{pullUpLeftHelper} arrives at a non-leaf subtree, $t = \{t.v,t.l,t.r\}$, an \texttt{update} is performed such that,
	$t' = \{t.v,t.r,t.l\}$, and then \texttt{pullUpLeftHelper} is invoked on $t'.r$.
	If $t'.r$ is a leaf, it is deleted, and $t'.r.v$ is returned.
	
	 If $t$ is a snapshot, then $t'$ is the result of allocating a new node (with snapshots of its children), otherwise $t$ is mutated directly, thus Snapshot Isolation is preserved.
	 
	 Assume that when the thread arrives at $t$, then $t$ is Braun.
	 Thus after \texttt{pullUpLeftHelper}, we have that $|t'.r| = |t.l|-1$ and $|t'.l| = |t.r|$.
	 Thus $|t'.r| \le |t'.l| \le |t'.r|+1$ and Braun holds for $t'$.
	 
	 Assume that when the thread arrives at $t$, then $t$ is a heap.
	 No elements are inserted below $t$, thus every rootward path through $t$ remains nonincreasing, and \texttt{pullUpLeftHelper} preserves heap.
	\end{proof}
	
	\begin{theorem}[Correctness of pushDownHelper ]
	If invoked by \texttt{removeMin},  \texttt{pushDownHelper} happens atomically, and is Braun and Snapshot Isolation preserving, and Heap restoring.
	\end{theorem}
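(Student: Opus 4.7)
The plan is to dispense with atomicity, Braun preservation, and Snapshot Isolation up front, and then concentrate on Heap restoration, which is the novel content of this theorem. For atomicity, I would observe that \texttt{pushDownHelper} runs while \texttt{removeMin} still holds the reentrant root lock and itself acquires child locks in strict leafward order, so the inductive hand-over-hand argument from \texttt{insert} and \texttt{pullUpLeftHelper} applies verbatim. Braun is immediate: \texttt{pushDownHelper} never reassigns a \texttt{left} or \texttt{right} pointer and never allocates or removes a node---it only swaps stored values between a node and one of its children---so $|n.l|$ and $|n.r|$ are unchanged everywhere in the traversed subtree. Snapshot Isolation reuses the copy-on-write argument already applied to \texttt{insert}: any node with $\mathtt{snapCount} > 0$ is realized by a fresh allocation whose children's \texttt{snapCount}s are incremented, so no structurally shared snapshot observes the value swap.

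For Heap restoration, the precondition inherited from \texttt{removeMin} is that, immediately before \texttt{pushDownHelper} is invoked on the root, every parent-child edge in the tree satisfies the heap ordering except possibly the two edges out of the root; this follows because \texttt{pullUpLeftHelper} preserves Heap below any edge it traverses, and the subsequent assignment to \texttt{root.v} touches only the root's stored value. I would then prove by induction on the depth of the currently visited node the invariant: when \texttt{pushDownHelper} is invoked on subtree $t$, every parent-child edge in the entire tree satisfies the heap ordering except possibly the edges from $t$ to its children. The base case is a null or leaf $t$: no further comparison is possible, recursion halts, and global Heap holds. In the inductive step, Braun restricts $t$'s children to three shapes (none, left-only necessarily a leaf, or both); in each nontrivial shape, a single comparison identifies the smaller child value $c$, and either $c \ge t.v$ (Heap already holds and recursion halts) or $t.v$ and $c$ are swapped and \texttt{pushDownHelper} is invoked on the recipient child.

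The main obstacle will be verifying the two-child case: I must check that swapping with the \emph{smaller} child both eliminates the local violation and does not create a new one at the untouched sibling. The sibling argument is that its stored value was already at least $t.v$ by the precondition, and hence at least the minimum that has just been planted at $t$. For the recipient child, its subtree was fully heap-ordered by precondition, so overwriting only its root value can introduce violations only on the edges from that child to its own children---exactly the invariant required for the recursive call. Termination follows from finite tree depth, so after the last recursive call the whole tree satisfies Heap.
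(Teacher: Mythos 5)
Your overall decomposition matches the paper's: atomicity by the same hand-over-hand induction as \texttt{insert}, Braun because the operation only swaps stored values and never changes subtree sizes, Snapshot Isolation by the copy-on-write reallocation, and Heap restoration by inducting down the traversal path with the invariant that the only possible violations are on the edges out of the currently visited node. Your version is actually more explicit than the paper's about stating that invariant and about why overwriting only the recipient child's root value re-establishes the invariant for the recursive call.

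However, your justification for the untouched sibling rests on a false premise. You claim the sibling's value ``was already at least $t.v$ by the precondition.'' It need not be: the precondition only confines violations to the edges out of $t$, and \emph{both} of those edges may be violated simultaneously (e.g.\ $t.v = 10$ with children holding $3$ and $5$; the sibling's $5$ is not at least $10$). The correct reason the sibling edge is satisfied after the swap is that the value planted at $t$ is the \emph{smaller} child's value (or $\min(t.v, c.v)$), and $c.v \le d.v$ precisely because $c$ was chosen as the smaller child; hence $t'.v \le d.v = d'.v$ regardless of how $t.v$ compared to $d.v$. This is exactly the step the paper carries via its ``without loss of generality $c.v \le d.v$'' bookkeeping. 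The fix is one line, but as written the two-child case --- which you correctly identify as the main obstacle --- is argued from a premise that fails in the very situations \texttt{pushDownHelper} exists to repair.
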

	\begin{proof}
	\texttt{pushDownHelper} is called with a write-lock on the root and observes hand-over-hand locking along its leafward path.
	The atomicity argument is as for \texttt{insert}.
	
	When a thread executing \texttt{pushDownHelper} arrives at a non-leaf $t = \{t.v,t.l,t.r\}$, \texttt{update}s are performed such that,
	$t'.v = \mathtt{min}(t.v, t.l.v, t.r.v)$, $c'.v = \mathtt{max}(t.v,c.v)$, $t'.c = c'$, where $c$ selects whichever of $t.l$ and $t.r$ contains the smaller value.
	Then \texttt{pushDownHelper} is invoked on $t'.c$.
	
	 If $t$ is a snapshot, then $t'$ is the result of allocating a new node (with snapshots of its children), otherwise $t$ is mutated directly.
	 If $c$ is a snapshot, then $c'$ is the result of allocating a new node (with snapshots of its children), otherwise $c$ is mutated directly.
	 Thus Snapshot Isolation is preserved.
	 
	 Assume that when the thread arrives at $t$, then $t$ is Braun.
	 Since no elements are added or removed, $t'$ is also Braun.
	 
	 Assume that when the thread arrives at $t$, the subtrees rooted at its children are heaps.
	 For children $c,d \in t$, we say without loss of generality that $c.v \le d.v$.
	 After \texttt{pushDownHelper}, we have that $t'.v = \mathtt{min}(c.v,t.v)$, $c'.v = \mathtt{max}(c.v,t.v)$, $d'.v = d.v$.
	 Thus $t'.v \le c'.v$ and also $t'.v \le d'.v$, and inductively, \texttt{pushDownHelper} restores heap.
	\end{proof}
	
	\begin{theorem}[Correctness of removeMin]
	\texttt{removeMin} happens atomically, and is Braun, Heap, and Snapshot Isolation preserving.
	\end{theorem}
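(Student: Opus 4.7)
The plan is to reduce correctness of \texttt{removeMin} to the already-established correctness of its two phases, \texttt{pullUpLeftHelper} and \texttt{pushDownHelper}, while paying careful attention to the compound atomicity argument, since the intermediate state briefly violates the Heap property.

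First I would address atomicity. The \texttt{removeMin} procedure acquires the heap's write-lock and, as noted in the overview, reentrantly retains the root's write-lock for the entire duration of both phases. Since every competing operation (\texttt{insert}, \texttt{getMin}, \texttt{snapshot}, or another \texttt{removeMin}) must acquire either the heap-level lock or at minimum the lock on the root before traversing leafward, no competing thread can observe any state between the end of \texttt{pullUpLeftHelper} and the end of \texttt{pushDownHelper}. Thus, even though the two phases are individually atomic by the preceding theorems, their composition is likewise atomic, with linearization point at the moment the write-lock on the heap is acquired.

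Next I would dispatch Snapshot Isolation and Braun by straight composition: both phases preserve Snapshot Isolation and both phases preserve Braun (the intermediate assignment \texttt{root.v = newV} only overwrites a value, which affects neither the shape of the tree nor sharing with snapshots, since the root was already mutable under whatever copy-on-write regime \texttt{pullUpLeftHelper} left in place). So Snapshot Isolation and Braun hold at the end of \texttt{removeMin}.

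The substantive step is Heap. Assume the heap holds at entry. By the correctness of \texttt{pullUpLeftHelper}, after phase one the tree is still a heap, but one leaf has been removed and its value \texttt{newV} returned. The assignment \texttt{root.v = newV} then overwrites the (minimum) root value with an arbitrary former-leaf value, which in general violates Heap at the root. However, every proper subtree rooted at a child of the root is untouched by this assignment and is still a heap. This precisely matches the precondition needed by the correctness theorem for \texttt{pushDownHelper}, which is Heap-restoring. The edge case is when the tree had only one node, in which case \texttt{pullUpLeftHelper} returns \texttt{None}, \texttt{pushDownHelper} is not invoked, and the heap becomes empty (trivially a heap). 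The main obstacle is simply being precise about this compositional atomicity: the proof must emphasize that the reentrant lock on the root, rather than strict hand-over-hand, is what bridges the Heap-violating gap between the two phases so that the composite operation is still linearizable.
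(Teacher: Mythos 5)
Your proposal is correct and follows essentially the same route as the paper's proof: atomicity via the continuously held reentrant write-lock on the root bridging the two phases, Braun and Snapshot Isolation by composition of the two helper theorems (with the observation that \texttt{pullUpLeftHelper} leaves the root mutable so the direct assignment to its value is safe), and Heap via the children remaining heaps so that \texttt{pushDownHelper}'s heap-restoring property applies. Your explicit treatment of the single-node edge case is a small addition the paper omits, but it does not change the structure of the argument.
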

	\begin{proof}
	After acquiring the heap, \texttt{removeMin} reentrantly acquires a double write-lock on the root, $r.v$, and updates the root, s.t. $r'.v = r.\mathtt{pullUpLeftHelper}$.
	After this, \texttt{removeMin} executes $r'.\mathtt{pushDownHelper}$.
	
	Since \texttt{pullUpLeftHelper} ensures that $r'$ is not a snapshot, the direct assignment to $r'.v$ does not violate Snapshot Isolation.
	
	Since the lock on the root is held continuously between starting \texttt{pullUpLeftHelper} and starting \texttt{pushDownHelper}, \texttt{removeMin} is atomic.
	
	Both subroutines are Braun preserving, so too is their composition.
	
	Since \texttt{pullUpLeftHelper} is heap preserving,  $r'.l$ and $r'.r$ are heaps, \texttt{pushDownHelper} is heap restoring, and \texttt{removeMin} is heap preserving.
	\end{proof}

	\section{Performance}
	We benchmarked the concurrent Braun heap implementation of this paper against two other concurrent priority queue implementations.
	\texttt{PriorityBlockingQueue} is part of the Java standard library, and its methods use a simple mutex around a high-performance single-threaded binary heap implemented with an array.~\cite{dleaPBQ}
	It provides no native snapshot operation; however, while the documentation asserts that its iterator is ``weakly consistent'', the source of the implementation reveals that it uses a mutex for a copy of the backing array, which provides strong consistency.
	The provided constructor for efficient construction from sorted bulk data is incompatible with the \texttt{Iterator} interface, so we implemented \texttt{snapshot} with successive inserts from the iterator.
	
	\texttt{SkipListPriorityQueue} is built on a variant of the Java standard library's \texttt{ConcurrentSkipListMap} that has been augmented with the lock-free linearizable iterator of Petrank and Timnat.~\cite{dleaCSLM,petrank2013lock}
	Simple CAS loops built on \texttt{putIfAbsent}, \texttt{replace}, and the key/value variant of \texttt{remove} were used to transform the ordered map into an ordered multiset supporting \texttt{insert} and \texttt{removeMin}.
	Their iterator was augmented with a wrapper to properly repeat each entry in the multiset.
	As with \texttt{PriorityBlockingQueue}, no native snapshot was supported, and no compatible bulk data constructor was available, so we implemented it with successive inserts from the iterator.
	
	\subsection{Methodology}
	The experiments were run on an Amazon EC2 ``c4.8xlarge'' instance, providing 36 virtual CPUs and 60GB of RAM, using an Oracle Java 8 runtime.
	For each sequence of tests, the priority queue implementation being benchmarked was initialized with a large quantity of random entries,
	then a series of warmup executions for the VM's JIT compiler were performed and discarded. Finally, each sequence of tests were performed and the times recorded and averaged over a number of iterations.
	For experiments where the priority queues were initialized with $2^{20}$ entries, 20 warmup runs and 20 experimental runs were used.
	For experiments where the priority queues were initialized with $2^{25}$ entries, 1 warmup run and 5 experimental runs were used due to the substantially longer execution times involved in the iterator tests.
	
	In all, five tests were conducted, Insert, RemoveMin, Sum, Snap+Insert, and Snap-Only.
	For the Insert tests, 1344 randomly selected values were inserted into the priority queue, with each of $t$ threads responsible for $1344/t$ insertions\footnote{
	This number was selected because it is a common multiple of the thread counts used, to avoid rounding error}.
	Similarly, for the RemoveMin tests, each of $t$ threads was tasked with executing RemoveMin $1344/t$ times.
	
	For the Sum tasks, all threads were tasked with individually calculating the sum of every entry in the priority queue.
	
	For the Snap+Insert tasks, each of $t$ threads created its own snapshot and inserted $1344/t$ items, whereas for the Snap-Only tasks, each of $t$ threads merely had to instantiate a snapshot.
	
	\subsection{Results}
	\begin{figure*}
		\centering
		\includegraphics[width=0.48\textwidth]{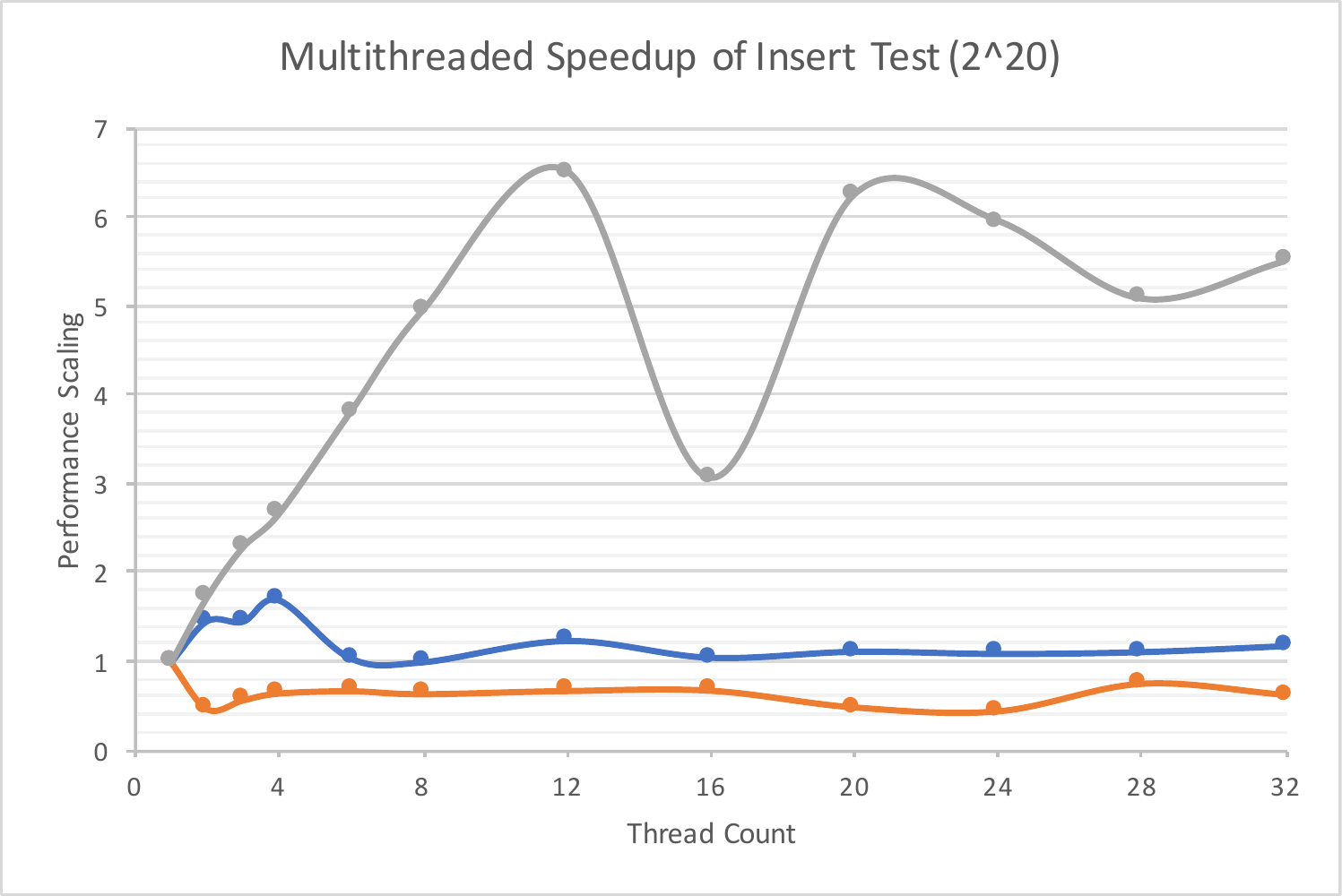}
		\includegraphics[width=0.48\textwidth]{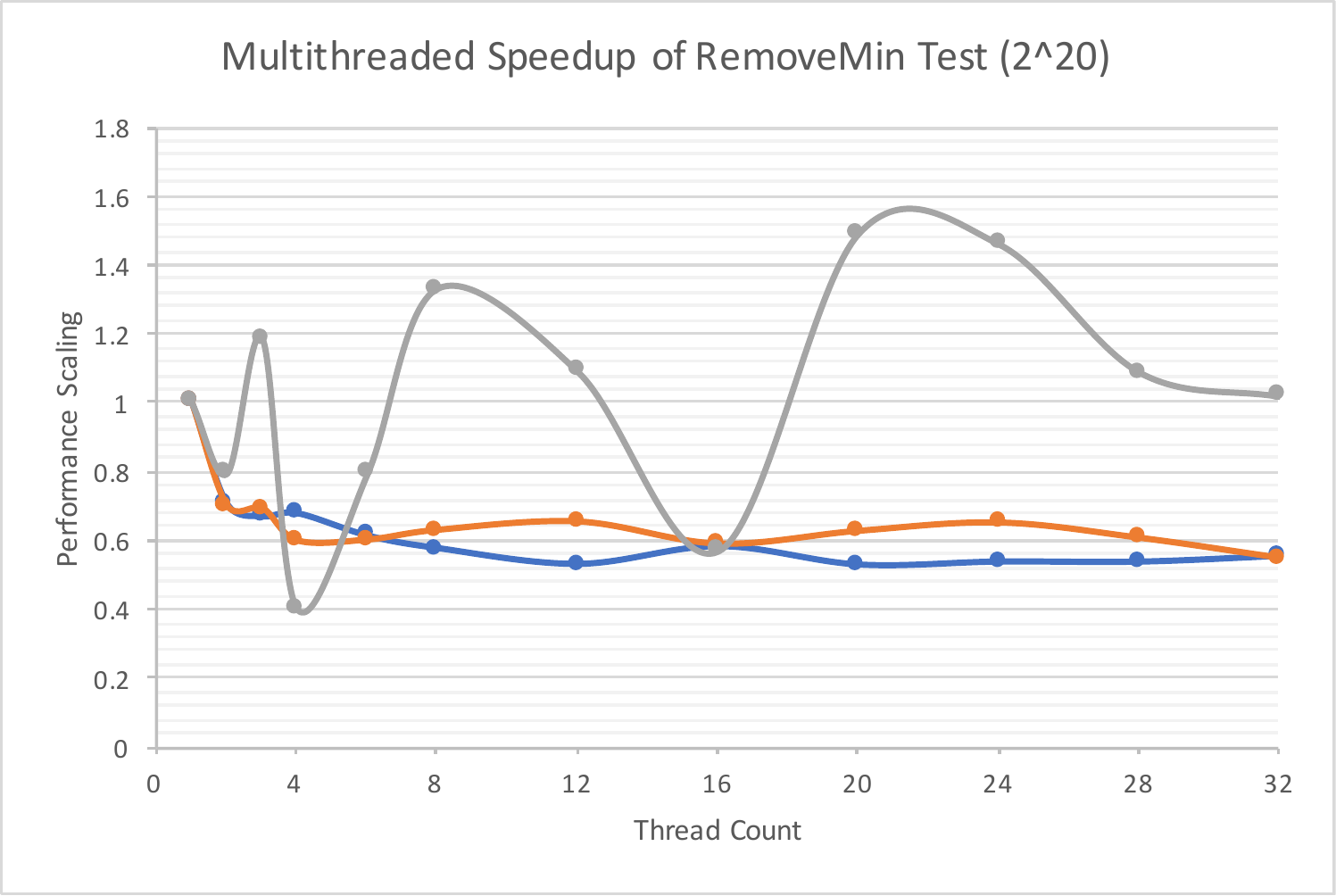}
		\includegraphics[width=0.48\textwidth]{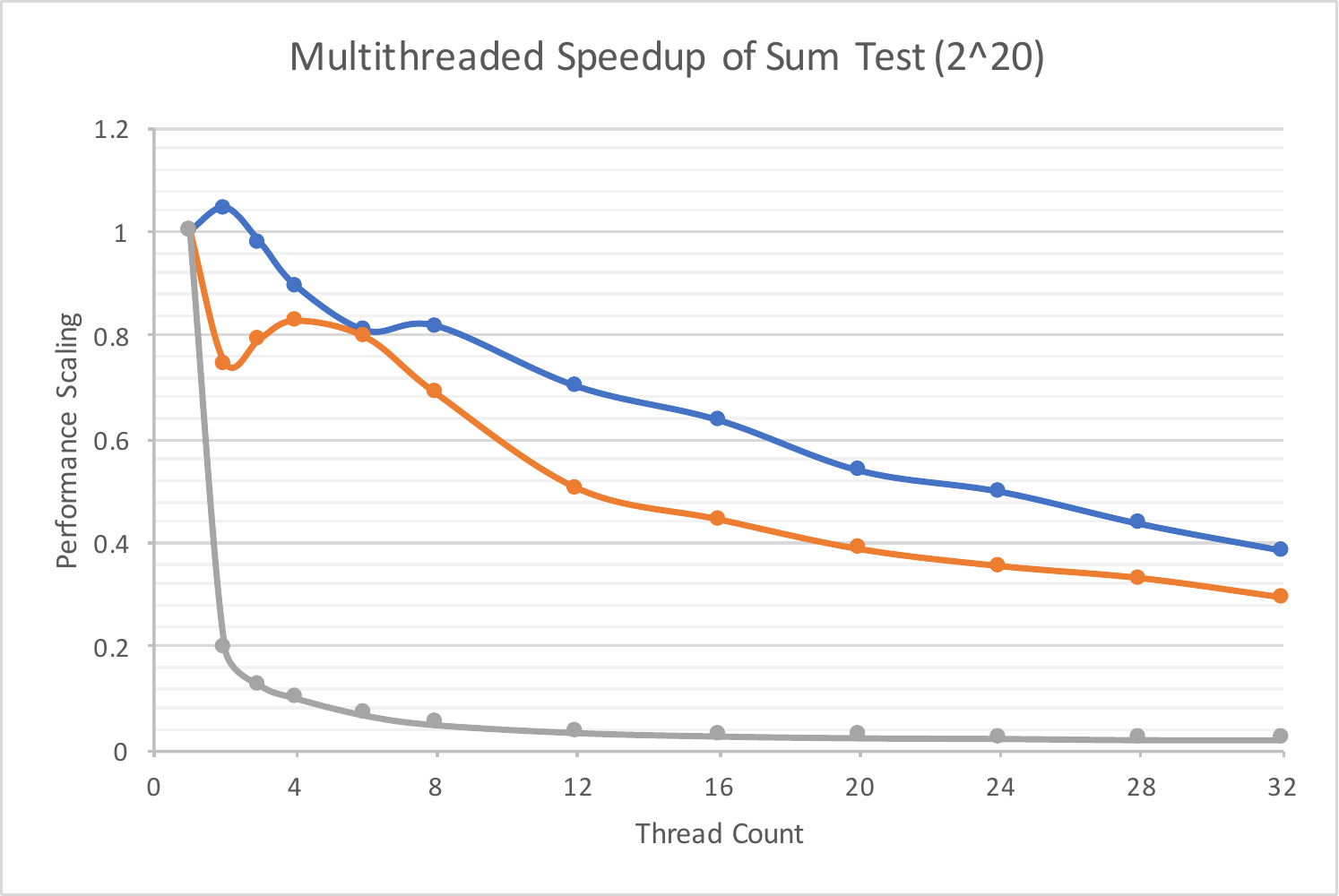}
		\includegraphics[width=0.48\textwidth]{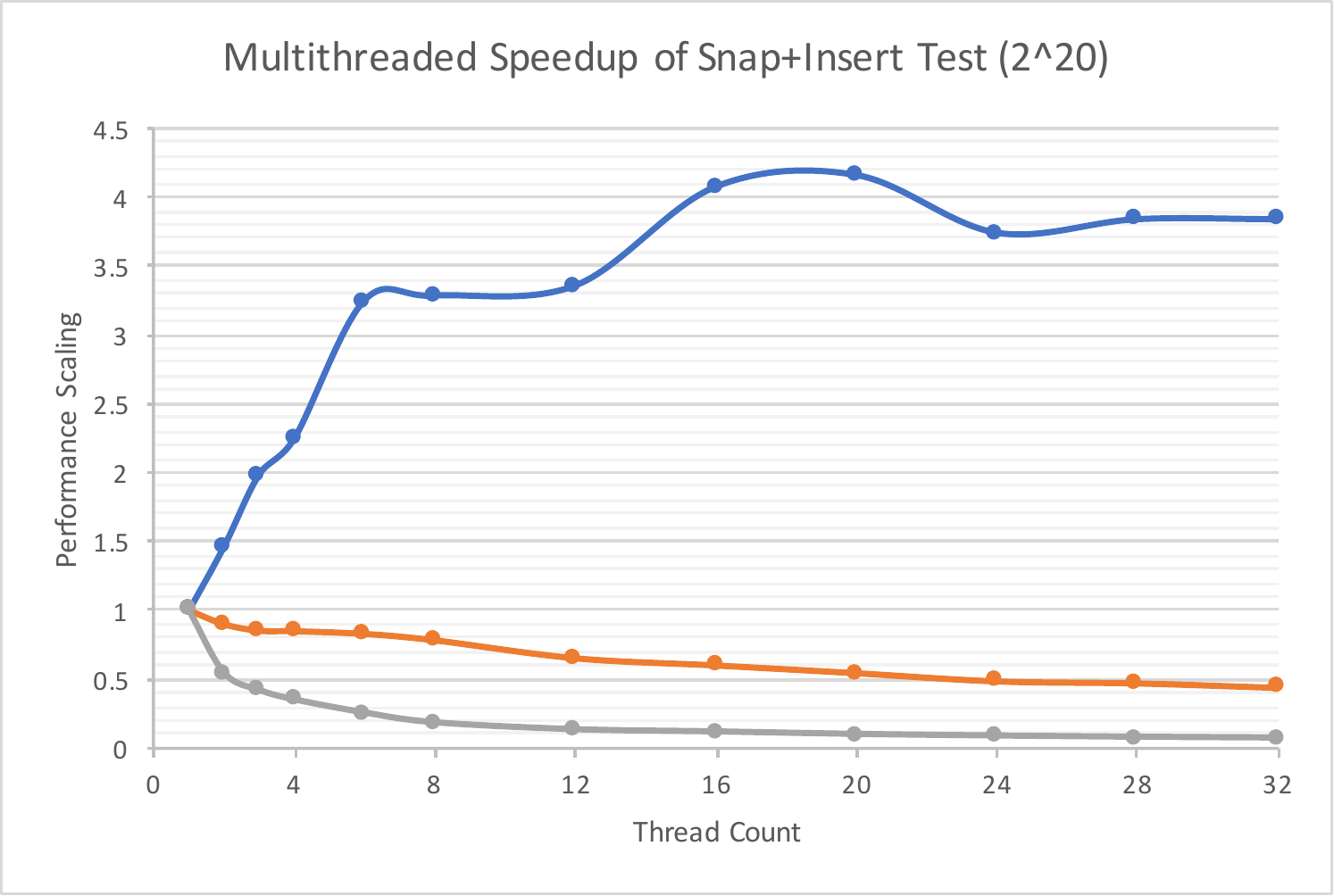}
		\includegraphics[width=0.75\textwidth]{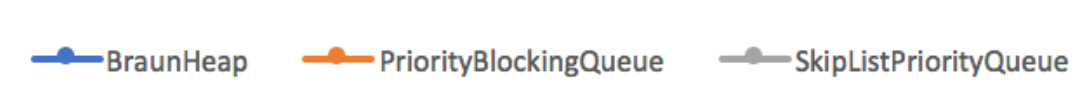}
		\caption{\label{fig:exp20}Left to right, top to bottom, results from the Insert, RemoveMin, Sum, and Snap+Insert tests on priority queues initialized with $2^{20}$ elements.}
	\end{figure*}
	In Figure \ref{fig:exp20}, the results of the first 4 tests, on priority queues of size $\approx2^{20}$, can be seen, plotted as speedups over their single threaded executions.
	Generally, the concurrent Braun heaps scaled better than \texttt{PriorityBlockingQueue}, except on the RemoveMin test, where both algorithms suffered from an O(log n) traversal with a lock on the root, but \texttt{PriorityBlockingQueue} was able to fall back on a highly-tuned single threaded algorithm backed by an array.
	On the tests involving iteration and snapshots, the concurrent Braun heap implementation scaled substantially better than either the \texttt{SkipListPriorityQueue} or the \texttt{PriorityBlockingQueue}.
	Also of note is that Petrank and Timnat's lock-free iterators appear to pay a heavy performance penalty for linearizability in a lock-free setting without the benefit of fast snapshots.
	\begin{figure*}
		\centering
		\includegraphics[width=0.5\textwidth]{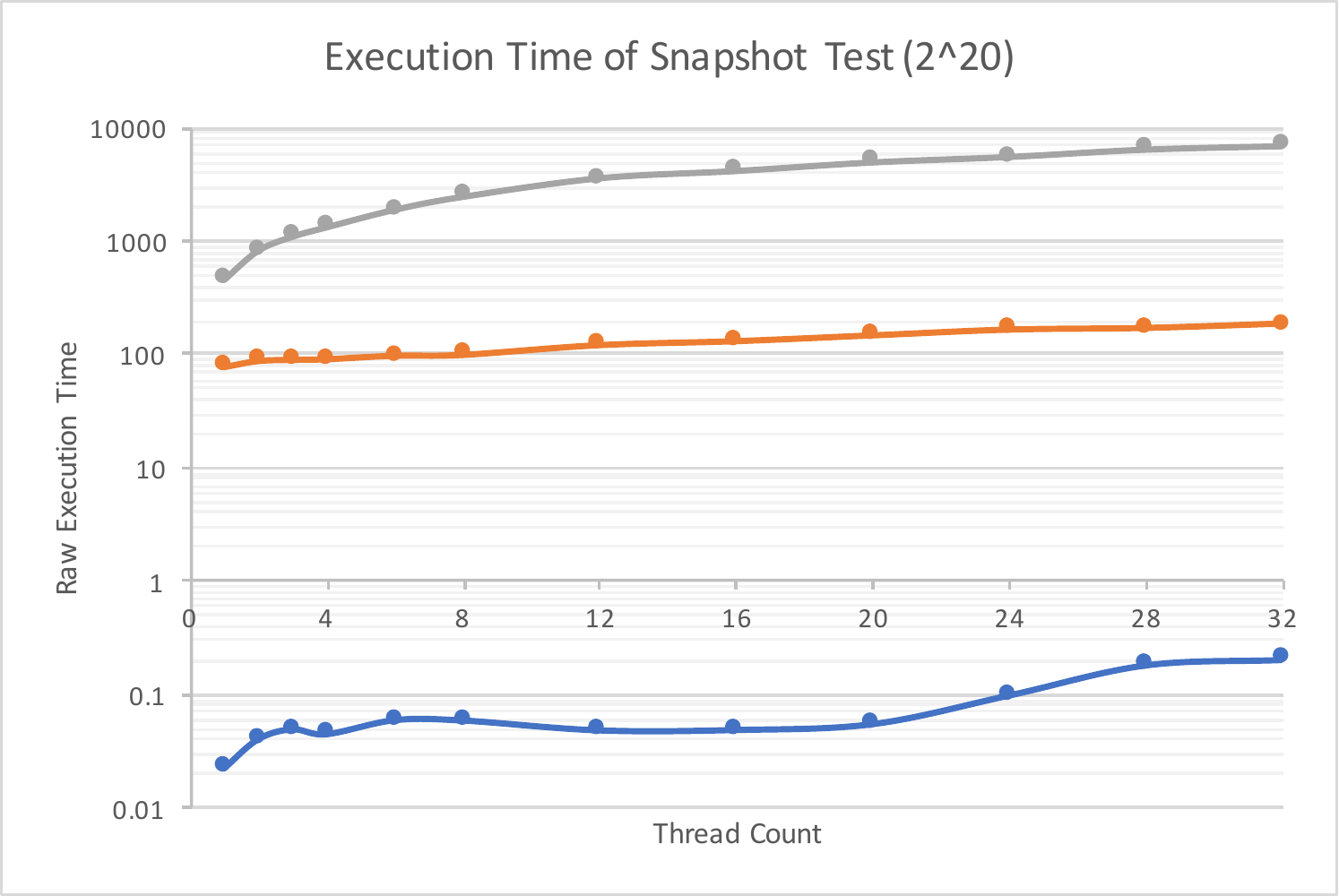}
		\caption{\label{fig:exp20b}Raw execution times of the Snap-Only test, plotted on a log scale.
		The legend coloring matches that of Figure \ref{fig:exp20}.}
	\end{figure*}
	For the four tests shown in Figure \ref{fig:exp20}, the raw performance numbers were within small to moderate constant factors of one another; however, with the Snap-Only test, plotted in Figure \ref{fig:exp20b}, the performance gap between $O(1)$ vs $\Omega(n)$ is fully evident in the execution times, in favor of the concurrent Braun heaps.
	
	\begin{figure*}
		\centering
		\includegraphics[width=0.32\textwidth]{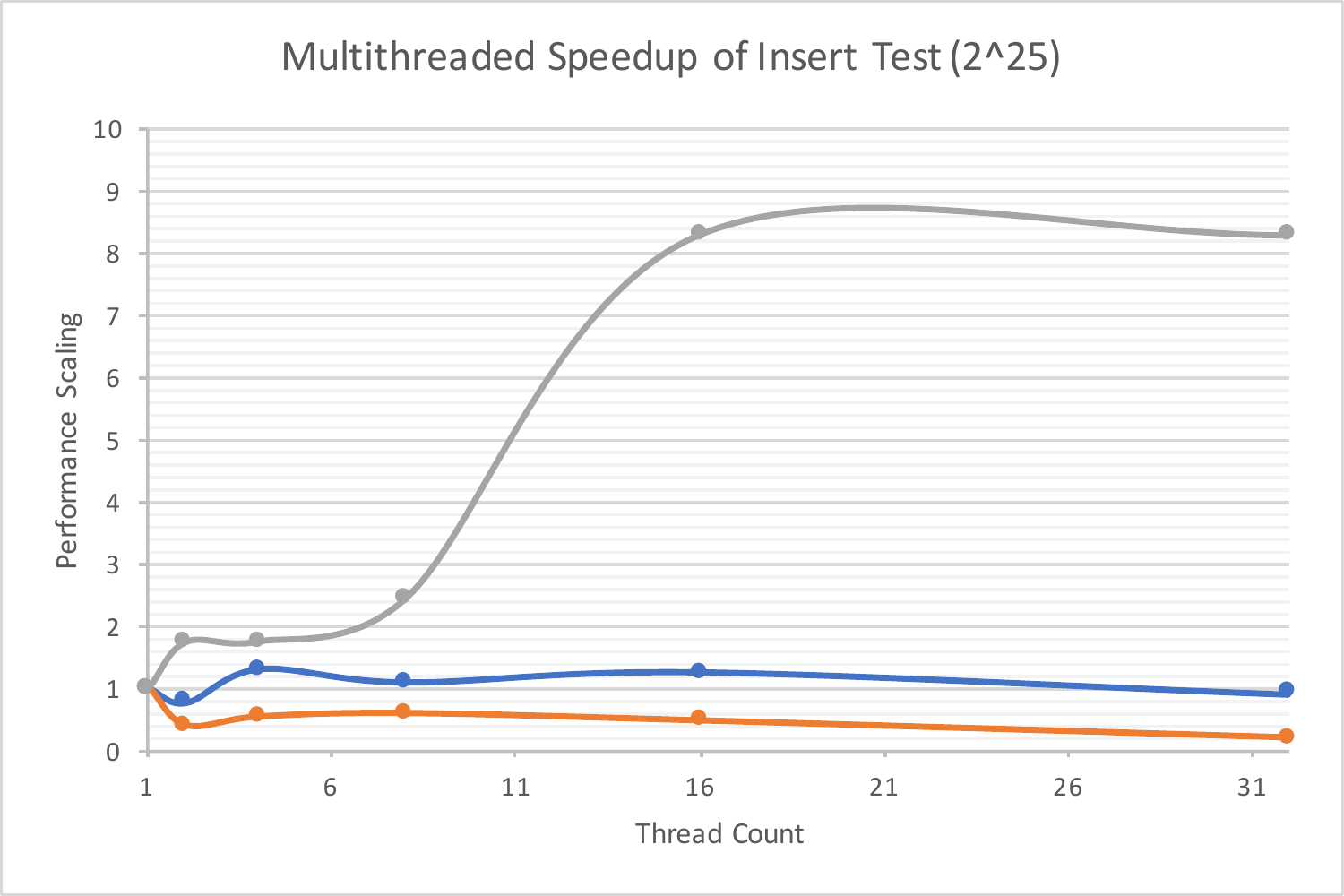}
		\includegraphics[width=0.32\textwidth]{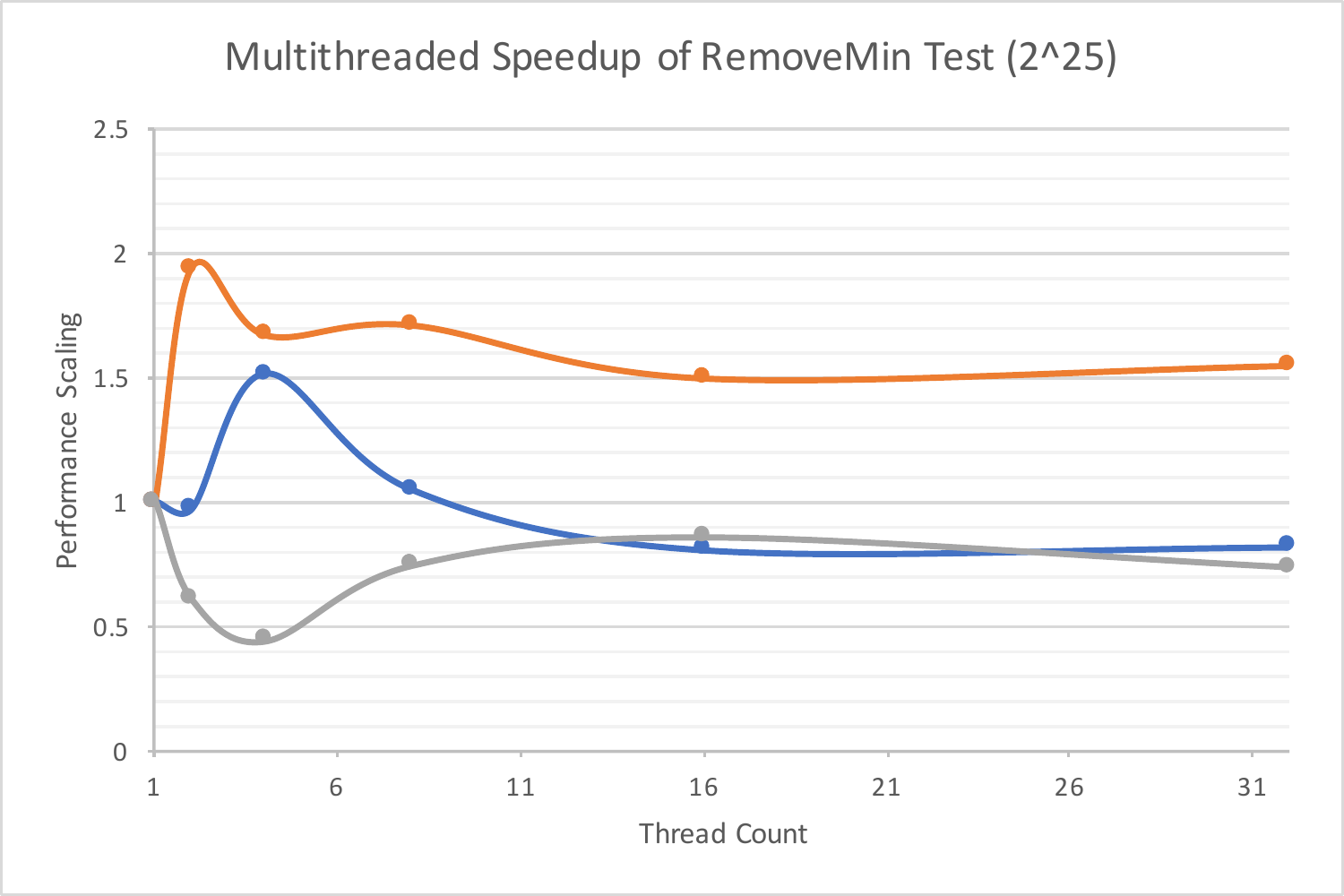}
		\includegraphics[width=0.32\textwidth]{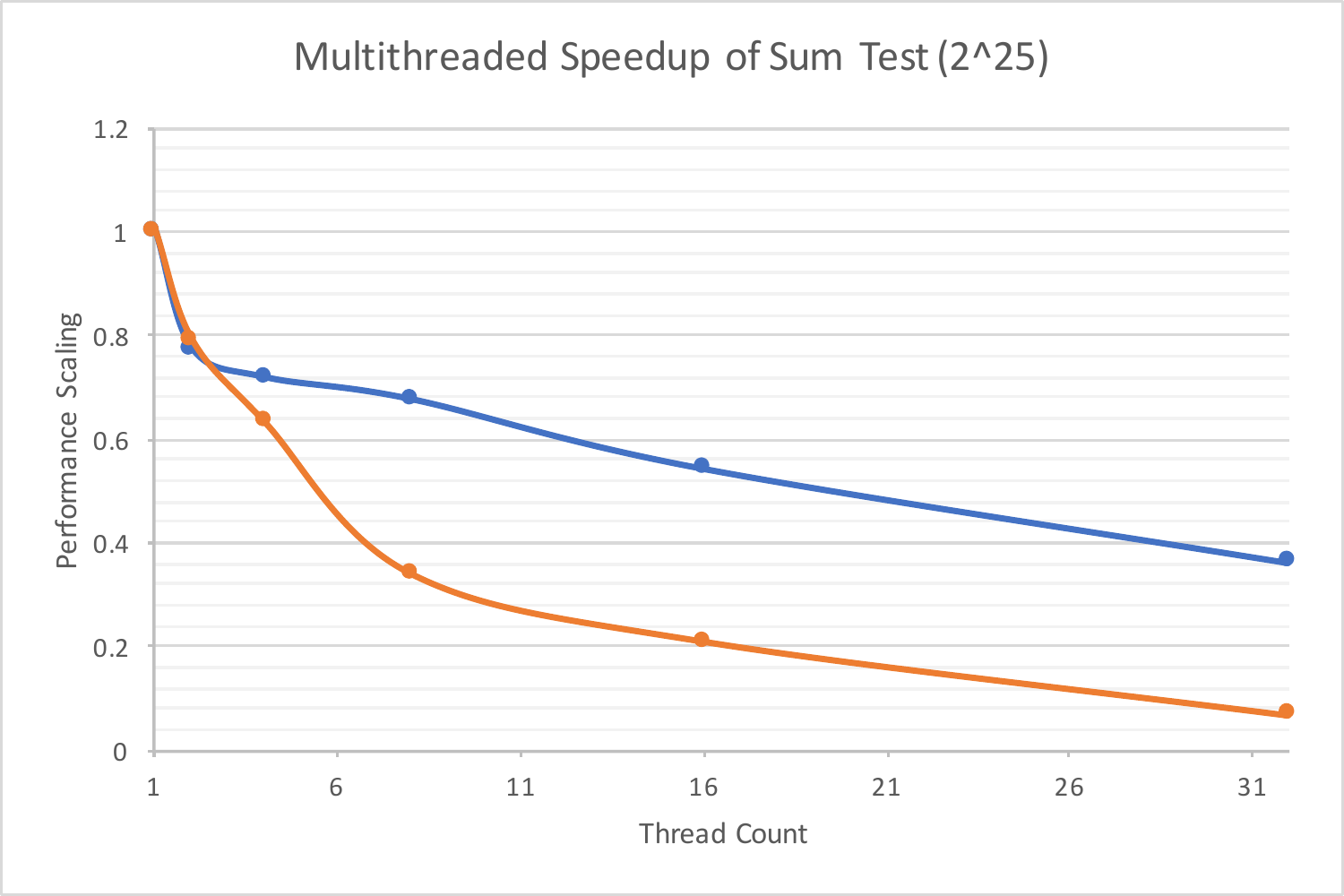}
		\caption{\label{fig:exp25}Left to right, Insert, RemoveMin, and Sum tests on priority queues initialized with $2^{25}$ elements.
		The legend coloring matches that of Figure \ref{fig:exp20}.}
	\end{figure*}
	The Insert, RemoveMin, and Sum tests were repeated on priority queues of size $\approx 2^{25}$.
	The snapshot-based tests were excluded due to the prohibitive memory costs for the SkipList and Blocking implementations\footnote{The SkipList implementation also had to be excluded from the final iterator test due to prohibitive execution times.}.
	For these tests, the performance scaling gaps already on display in earlier tests for Insert and Sum were amplified; however, the results for RemoveMin now favored \texttt{PriorityBlockingQueue}.
	More comprehensive benchmarking is in order to better characterize this regime change.
	
	\section{Conclusion}
	The algorithm for concurrent Braun heaps presented here is both easily implemented\footnote{The \texttt{ConcurrentSkipListMap} source code, on which the competing \texttt{SkipListPriorityQueue} was based, has $>2000$ lines of code, whereas the BraunHeap in the appendix has $<200$.} and easily verified, while providing competitive performance scaling on traditional priority queue operations, and supporting fast and consistent snapshots and iteration.
	
	\subsection{Future Work}
	This work leads naturally to several future lines of inquiry.
	Much of the algorithm lends itself to a lock-free approach based on CAS instructions; however, the two phases of the \texttt{removeMin} operation leave the heap in an inconsistent state if not executed atomically.
	An efficient approach to speculative executions of \texttt{removeMin} would be a useful next step in developing a lock-free concurrent Braun heap (there is also a fairness concern here, as long-running remove operations may consistently get delayed by fast inserts).
	
	The experimental results of this paper also suggest that without fast snapshots, iteration over concurrent data structures is inordinately expensive due to synchronization costs.
	Another direction would be to address this by leveraging copy-on-write semantics in the pursuit of efficient snapshot operations for other data structures, potentially including skip lists.
		
	\bibliography{references}

\begin{thebibliography}{10}

\bibitem{antonopoulostheory}
Timos Antonopoulos, Paul Gazzillo, Eric Koskinen, and Zhong Shao.
\newblock A theory of vertically composable transactional objects.

\bibitem{baker1977incremental}
Henry~C Baker~Jr and Carl Hewitt.
\newblock The incremental garbage collection of processes.
\newblock In {\em ACM Sigplan Notices}, volume~12, pages 55--59. ACM, 1977.

\bibitem{berenson1995critique}
Hal Berenson, Phil Bernstein, Jim Gray, Jim Melton, Elizabeth O'Neil, and
  Patrick O'Neil.
\newblock A critique of ansi sql isolation levels.
\newblock In {\em ACM SIGMOD Record}, volume~24, pages 1--10. ACM, 1995.

\bibitem{braun1983logarithmic}
W~Braun and M~Rem.
\newblock A logarithmic implementation of flexible arrays.
\newblock {\em Memorandum MR83/4, Eindhoven University of Technology}, 1983.

\bibitem{bronson2010practical}
Nathan~G Bronson, Jared Casper, Hassan Chafi, and Kunle Olukotun.
\newblock A practical concurrent binary search tree.
\newblock In {\em ACM Sigplan Notices}, volume~45, pages 257--268. ACM, 2010.

\bibitem{daudjee2006lazy}
Khuzaima Daudjee and Kenneth Salem.
\newblock Lazy database replication with snapshot isolation.
\newblock In {\em Proceedings of the 32nd international conference on Very
  large data bases}, pages 715--726. VLDB Endowment, 2006.

\bibitem{dickerson2017proust}
Thomas~D Dickerson, Paul Gazzillo, Eric Koskinen, and Maurice Herlihy.
\newblock Proust: A design space for highly-concurrent transactional data
  structures.
\newblock {\em arXiv preprint arXiv:1702.04866}, 2017.

\bibitem{friedman1976impact}
Daniel~P Friedman and David~Stephen Wise.
\newblock {\em The impact of applicative programming on multiprocessing}.
\newblock Indiana University, Computer Science Department, 1976.

\bibitem{herlihy2007simple}
Maurice Herlihy, Yossi Lev, Victor Luchangco, and Nir Shavit.
\newblock A simple optimistic skiplist algorithm.
\newblock In {\em International Colloquium on Structural Information and
  Communication Complexity}, pages 124--138. Springer, 2007.

\bibitem{herlihy1990linearizability}
Maurice~P Herlihy and Jeannette~M Wing.
\newblock Linearizability: A correctness condition for concurrent objects.
\newblock {\em ACM Transactions on Programming Languages and Systems (TOPLAS)},
  12(3):463--492, 1990.

\bibitem{hoogerwoord1993logarithmic}
Rob Hoogerwoord.
\newblock A logarithmic implementation of flexible arrays.
\newblock In {\em Mathematics of Program Construction}, pages 191--207.
  Springer, 1993.

\bibitem{dleaCSLM}
Doug Lea.
\newblock Concurrent skip list map.
\newblock java.util.concurrent.

\bibitem{dleaPBQ}
Doug Lea.
\newblock Priority blocking queue.
\newblock java.util.concurrent.

\bibitem{nipkow2016priority}
Tobias Nipkow.
\newblock Priority queues based on braun trees.
\newblock 2016.

\bibitem{okasaki1996purely}
Chris Okasaki.
\newblock {\em Purely functional data structures}.
\newblock PhD thesis, Princeton University, 1996.

\bibitem{okasaki1997three}
Chris Okasaki.
\newblock Three algorithms on braun trees.
\newblock {\em Journal of Functional Programming}, 7(06):661--666, 1997.

\bibitem{petrank2013lock}
Erez Petrank and Shahar Timnat.
\newblock Lock-free data-structure iterators.
\newblock In {\em International Symposium on Distributed Computing}, pages
  224--238. Springer, 2013.

\bibitem{prokopec2012concurrent}
Aleksandar Prokopec, Nathan~Grasso Bronson, Phil Bagwell, and Martin Odersky.
\newblock Concurrent tries with efficient non-blocking snapshots.
\newblock In {\em Acm Sigplan Notices}, volume~47, pages 151--160. ACM, 2012.

\bibitem{pugh1990concurrent}
William Pugh.
\newblock Concurrent maintenance of skip lists.
\newblock 1990.

\bibitem{shavit2000skiplist}
Nir Shavit and Itay Lotan.
\newblock Skiplist-based concurrent priority queues.
\newblock In {\em Parallel and Distributed Processing Symposium, 2000. IPDPS
  2000. Proceedings. 14th International}, pages 263--268. IEEE, 2000.

\end{thebibliography}
	\appendix
	\section{Extended source code listings}
	\begin{listing*}
	\begin{minted}{scala}
def throughLock[N,Z](fL:N => Lock)(fZ:N => Z)(n:N):Z = {
    val l = fL(n); l.lock;    val r = fZ(n); l.unlock; r
}
def withNullPath[E:Ordering,Z](v:HeapNode[E], nP: => Z, nnP: HeapNode[E] => Z):Z = {
    if(v == null) nP else nnP(v)
}
def dfsIfNotNull[E:Ordering](n:HeapNode[E]):Stream[E] = {
    withNullPath(n,Stream.Empty,{node => node.busyRead.lock; node.dfs})
}
def snapshotIfNotNull[E:Ordering](n:HeapNode[E]):HeapNode[E] = {
    withNullPath(n, null, _.snapshot)
}
def lockPair:(WriteLock, ReadLock) = {
    val r = new ReentrantReadWriteLock; (r.writeLock, r.readLock)
}
def ensureWriteable[E:Ordering](n:HeapNode[E], mut:HeapNode[E]=>Unit, v:E):HeapNode[E] = {
    val ret = withNullPath(n,new HeapNode(v),{
        concrete:HeapNode[E] =>
            concrete.busyWrite.lock
            concrete.unsnap
    }); mut(ret); ret
}
	\end{minted}
	\caption{\label{lst:parbheaphelpers} Shared helper methods for the concurrent BraunHeap and HeapNode classes.}
	\end{listing*}
	\begin{listing*}
	\begin{minted}{scala}
    @volatile var snapCount = 0
    val (busyWrite, busyRead) = lockPair
    def snapshot:HeapNode[E] = {
        busyWrite.lock; snapCount += 1; busyWrite.unlock; this
    }
    def unsnapHelper:HeapNode[E] = {
        if(snapCount > 0){
            val nV = value 
            val nL = snapshotIfNotNull(left)
            val nR = snapshotIfNotNull(right)
            snapCount -= 1; val ret = new HeapNode(nV, nL, nR)
            ret.busyWrite.lock; busyWrite.unlock; ret
        } else { this }
    }
    def unsnap:HeapNode[E] = {
        throughLock[HeapNode[E],HeapNode[E]](_.busyRead)(_.unsnapHelper)(this)
    }
    private[BraunHeap] def dfs:Stream[E] = {
        busyRead.unlock; ((value #:: dfsIfNotNull(left)) #::: dfsIfNotNull(right))
    }
	\end{minted}
	\caption{\label{lst:parheapnode1} The concurrent HeapNode is augmented with a COW-based snapshot functionality.}
	\end{listing*}
	\begin{listing*}
	\begin{minted}{scala}
class HeapNode[E:Ordering](@volatile var v:E, @volatile var left:HeapNode[E],
                                              @volatile var right:HeapNode[E]){
    def updateHelper(nV:E, nL:HeapNode[E], nR:HeapNode[E]):HeapNode[E] = {
        v = nV; left = nL; right = nR; this
    }
    def update(nV:E, nL:HeapNode[E], nR:HeapNode[E]):HeapNode[E] = {
        unsnap.updateHelper(nV, nL, nR)
    }
    def insertHelper(nV:E):Unit = {
        val (smaller, larger) = if(nV < v){ (nV, v) } else { (v, nV) }
        val insertNeeded = (right != null)
        update(smaller, ensureWriteable(right, right_=, larger),
                        left).insertHelperPost(insertNeeded, larger)
    }
    def insertHelperPost(insertNeeded:Boolean, larger:E):Unit = {
        if(insertNeeded){
            val leftNow = left; busyWrite.unlock; leftNow.insertHelper(larger)            
        } else { busyWrite.unlock }
    }
    def pullupLeftHelper(selfP:HeapNode[E]=>Unit, amRoot:Boolean = false):Option[E] = {
        withNullPath(left,{
            selfP(null); if(snapCount > 0) { snapCount -= 1 } busyWrite.unlock
            if(amRoot){ busyWrite.unlock; None } else { Some(v) }
        },update(v, right, _).pullupLeftHelperPost(selfRef))
    }
    def pullupLeftHelperPost(selfP:HeapNode[E]=>Unit):Option[E] = {
        selfP(this); right.busyWrite.lock; busyWrite.unlock
        right.pullupLeftHelper(right_=)
    }
    def pushdownHelper:Unit = {
        val pushNext = if(right == null && left != null){
            val tmpLV = left.v
            if(tmpLV < v){
                left.busyWrite.lock; left = left.update(v)
                left.busyWrite.unlock; value = tmpLV
            } null
        } else if(right != null) {
            val (tmpV, tmpLV, tmpRV) = (v, left.value, right.value)
            if((tmpLV < tmpV) || (tmpRV < tmpV)){
                val (nV, nU, nP) = if(tmpLV < tmpRV) {
                    (tmpLV, left_= _, left)
                } else { (tmpRV, right_= _, right) }
                v = nV; nP.busyWrite.lock; val ret = nP.update(tmpV); nU(ret); ret
            } else { null }
        } else { null }
        busyWrite.unlock
        if(pushNext != null) pushNext.pushdownHelper
    }
}
	\end{minted}
	\caption{\label{lst:parheapnode2} The core BraunHeap algorithm is unchanged by hand-over-hand locking. Some methods have been refactored to respect COW boundaries.}
	\end{listing*}
	\begin{listing*}
	\begin{minted}{scala}
class BraunHeap[E:Ordering] {
    @volatile var root:HeapNode[E] = initRoot
    val (busyWrite, busyRead) = lockPair
    def syncReadRoot:HeapNode[E] = throughLock(_.busyRead)(_.root)(this)
    def snapshot:BraunHeap[E] = {
        new BraunHeap[E](snapshotIfNotNull(syncReadRoot))
    }
    def iterator:Iterator[E] = dfsIfNotNull(snapshot.root).iterator  
    def getMin:Option[E] = {
        Option(syncReadRoot).map{throughLock(_.busyRead)(_.value) _}
    }
    def insertPre(value:E):Boolean = {
        val insertNeeded = (root != null)
        ensureWriteable(root, root_=, value)
        insertNeeded
    }
    def insert(value:E):Unit =  {
        if(throughLock[BraunHeap[E],Boolean](_.busyWrite){
            _.insertPre(value)
        }(this)){ root.insertHelper(value) }
    }
    def removeMinPre:(Option[E], Option[E]) = {
        withNullPath(root,(None, None),{
                rootInit:HeapNode[E] =>
                    rootInit.busyWrite.lock
                    root = rootInit.unsnap
                    root.busyWrite.lock
                    val oldRoot = Some(root.value);
                    val newRoot = root.pullupLeftHelper(root_=, true)
                    (oldRoot, newRoot)
            })
    }
    def removeMin:Option[E] = {
        val (ret, newRoot) = throughLock(_.busyWrite){_.removeMinPre}(this)
        if(ret.isDefined && newRoot.isDefined){
            root.value = newRoot.get
            root.pushdownHelper
        } ret
    }
}
	\end{minted}
	\caption{\label{lst:parbraunheap} The core BraunHeap algorithm is unchanged by hand-over-hand locking.}
	\end{listing*}
\end{document}